\pgfplotsset{compat=1.15}
\newcommand*{\Resize}[2]{\resizebox{#1}{!}{$#2$}}%
\newtheorem{lem}{Lemma}
\theoremstyle{definition}
\newtheorem{remarknum}{Remark} 
\newcommand{\CF}[0]{{\mathcal{F}}}
\newcommand{\CK}[0]{{\mathcal{K}}}
\newcommand{\CP}[0]{{\mathcal{P}}}
\newcommand{\CT}[0]{{\mathcal{T}}}
\newcommand{\Bx}[0]{{\mathbf{x}}}
\newcommand{\By}[0]{{\mathbf{y}}}
\newcommand{\Bz}[0]{{\mathbf{z}}}
\newcommand{\BH}[0]{{\mathbf{H}}}
\newcommand{\BI}[0]{{\mathbf{I}}}
\newcommand{\BK}[0]{{\mathbf{K}}}
\newcommand{\BQ}[0]{{\mathbf{Q}}}
\title{
Multicast Transmission Design with Enhanced DoF\\ for MIMO Coded Caching Systems
}
\name{Mohammad NaseriTehrani, MohammadJavad Salehi and Antti T\"olli
}
\address{Centre for Wireless Communications, University of Oulu, Finland \\
\textrm{E-mail: \{mohammad.naseritehrani, mohammadjavad.salehi, antti.tolli\}}@oulu.fi}
\begin{document}


\maketitle

\begin{abstract}
Integrating coded caching (CC) 
into multi-input multi-output (MIMO) setups significantly enhances the achievable degrees of freedom (DoF). 
We consider a cache-aided MIMO configuration with a CC gain $t$, where a server with $L$ Tx-antennas communicates with $K$ users, each equipped with $G$ Rx-antennas.
Similar to existing works, we also extend a core CC approach, designed initially for multi-input single-output (MISO) scenarios, to the MIMO setup. However, in the proposed MIMO strategy, rather than replicating the transmit scheme from the MISO setup, the number of users $\Omega$ served in each transmission is fine-tuned to maximize DoF. 
As a result, an optimized DoF of ${\max_{\beta, \Omega }}{\Omega \beta}$ is achieved, where ${\beta \le \mathrm{min}\big(G,\nicefrac{L \binom{\Omega-1}{t}}{1 + (\Omega - t-1)\binom{\Omega-1}{t}}\big)}$ is the number of parallel streams decoded by each user. 
For the considered MIMO-CC setup, we also introduce an effective multicast transmit covariance matrix design for the symmetric rate maximization objective solved iteratively via successive convex approximation (SCA). Finally, numerical simulations verify the enhanced DoF and improved performance of the proposed design.
\end{abstract}

\begin{keywords}
coded caching, MIMO, multicasting 
\end{keywords}

\section{Introduction}
\label{section:intro}


\color{black} 
\par
The expanding demand for multimedia content, mobile immersive viewing, and extended reality applications
are driving continuous growth in mobile data traffic. 
This has spurred the development of novel techniques, such as coded caching (CC)~\cite{maddah2014fundamental}, which stands out for its intriguing potential of offering a performance boost $t$ proportional to the cumulative cache size of all network users. CC ingeniously leverages network devices' onboard memory as a communication resource, especially beneficial for cacheable multimedia content. The performance gain in CC arises from multicasting well-constructed codewords to user groups of size $t+1$, allowing each user to eliminate unwanted message parts using its cached content. 
While CC was originally designed for single-input single-output (SISO) setups~\cite{maddah2014fundamental}, later studies showed it could also be used in multiple-input single-output (MISO) systems, demonstrating that spatial multiplexing and coded caching gains are additive~\cite{shariatpanahi2018physical}. 
This is achieved by serving 
multiple groups of users simultaneously with multiple multicast messages and suppressing the intra-group interference by beamforming.
Accordingly, in a MISO-CC setting with $L$ Tx antennas, $t+L$ users can be served in parallel, and the so-called degree-of-freedom (DoF) of $t+L$ is achievable~\cite{shariatpanahi2016multi,shariatpanahi2018physical}. In other works, authors in~\cite{tolli2017multi,tolli2018multicast} discussed how multi-group multicast optimized beamformers could improve the performance of MISO-CC schemes in the finite signal-to-noise ratio (SNR) regime. In the same works, the spatial multiplexing gain and the number of partially overlapping multicast messages were flexibly adjusted for a trade-off between design complexity and finite-SNR performance. 

While MISO-CC has been extensively studied in the literature, multi-input multiple-output (MIMO)-CC has received less attention. In~\cite{cao2017fundamental}, the optimal DoF of cache-aided MIMO networks with three transmitters and three receivers were studied, and in~\cite{cao2019treating}, general message sets were used to introduce inner and outer bounds on the achievable DoF of MIMO-CC schemes. More recently, low-complexity MIMO-CC schemes for single-transmitter setups were studied in~\cite{salehi2021MIMO}, and it was shown that with $G$ antennas at each receiver, if $\frac{L}{G}$ is an integer, the single-shot DoF of $Gt+L$ is achievable. Moreover, unicast and multicast beamforming strategies for improving finite-SNR MIMO-CC performance were explored in~\cite{salehi2023multicast}.

In this paper, we extend the works of~\cite{salehi2023multicast,salehi2021MIMO} on MIMO-CC schemes. The contributions are two-fold. First, we alter the delivery scheme of~\cite{salehi2021MIMO} to improve the achievable DoF. The idea is that, instead of serving the same number of users as the baseline MISO-CC scheme in each transmission, we judiciously select this parameter to maximize the single-shot DoF. As a result, the integer constraint of~\cite{salehi2021MIMO} on $\frac{L}{G}$ is removed and the improved DoF of $\max_{\beta, \Omega }{\Omega \beta}$, where ${\beta \le \mathrm{min}\big(G,\nicefrac{L \binom{\Omega-1}{t}}{1 + (\Omega - t-1)\binom{\Omega-1}{t}}\big)}$, is achieved. Second,  
we extend the linear multicast beamformer design of~\cite{salehi2023multicast}, which had limited flexibility in the design parameters, by formulating a symmetric rate maximization problem with respect to (w.r.t) the general transmission covariance matrices of multicast messages for a given delivery scheme, and where the user-specific rate is aggregated from all multicast messages intended to the given user. As the resulting problem is non-convex, we use successive convex approximation (SCA) to find the symmetric rate, which exceeds the rate of~\cite{salehi2023multicast}. Numerical simulations confirm the enhanced DoF and improved performance. 

Throughout the text, we use these notations: For any integer $\Resize{0.2cm}{J}$, $\Resize{2.2cm}{[J] \equiv \{1,2,\dots,J\}}$. Bold upper- and lower-case letters signify matrices and vectors, respectively. Calligraphic letters denote sets, $\Resize{.45cm}{|\CK|}$ denotes set size, and $\Resize{.65cm}{\CK \backslash \CT}$ represents elements in $\CK$ excluding those in $\Resize{.25cm}{\CT}$. $\mathsf B$ is the collection of sets, with its count as $|\mathsf B|$. Additional notations are introduced as needed. \vspace{-3mm}

\section{System Model}
\label{section:sys_model}
\subsection{Network Setup}
\label{section:network setup}
We consider a MIMO setup where a single BS with $L$ transmit antennas serves $K$ cache-enabled users each with $G$ received antennas, as shown in Figure~\ref{fig:ISIT_sysm}.\footnote{In general,
$L$ and $G$ denote spatial multiplexing capability, potentially smaller than the actual number of antennas due to rank and RF limitations.} 
Every user has a cache memory of size $M$ data units and requests files from a library $\CF$ of $N$ unit-sized files.
The coded caching gain is defined as $t \equiv \frac{KM}{N}$, which represents how many copies of the file library could be stored in the cache memories of all users. The system operation comprises two phases; placement and delivery. In the placement phase, the users' cache memories are filled with data. Following a similar structure as~\cite{shariatpanahi2016multi}, we split each file $W \in \CF$ into $\binom{K}{t}$ subfiles $W_{\CP}$, where $\CP \subseteq [K]$ denotes any subset of users with $|\CP| = t$. Then, we store  $W_{\CP}$, $\forall W \in \CF,\forall \CP : k \in \CP$ in the cache memory of user $k \in [K]$. 
\begin{figure}[t]
        \centering
        \includegraphics[height = 4cm]{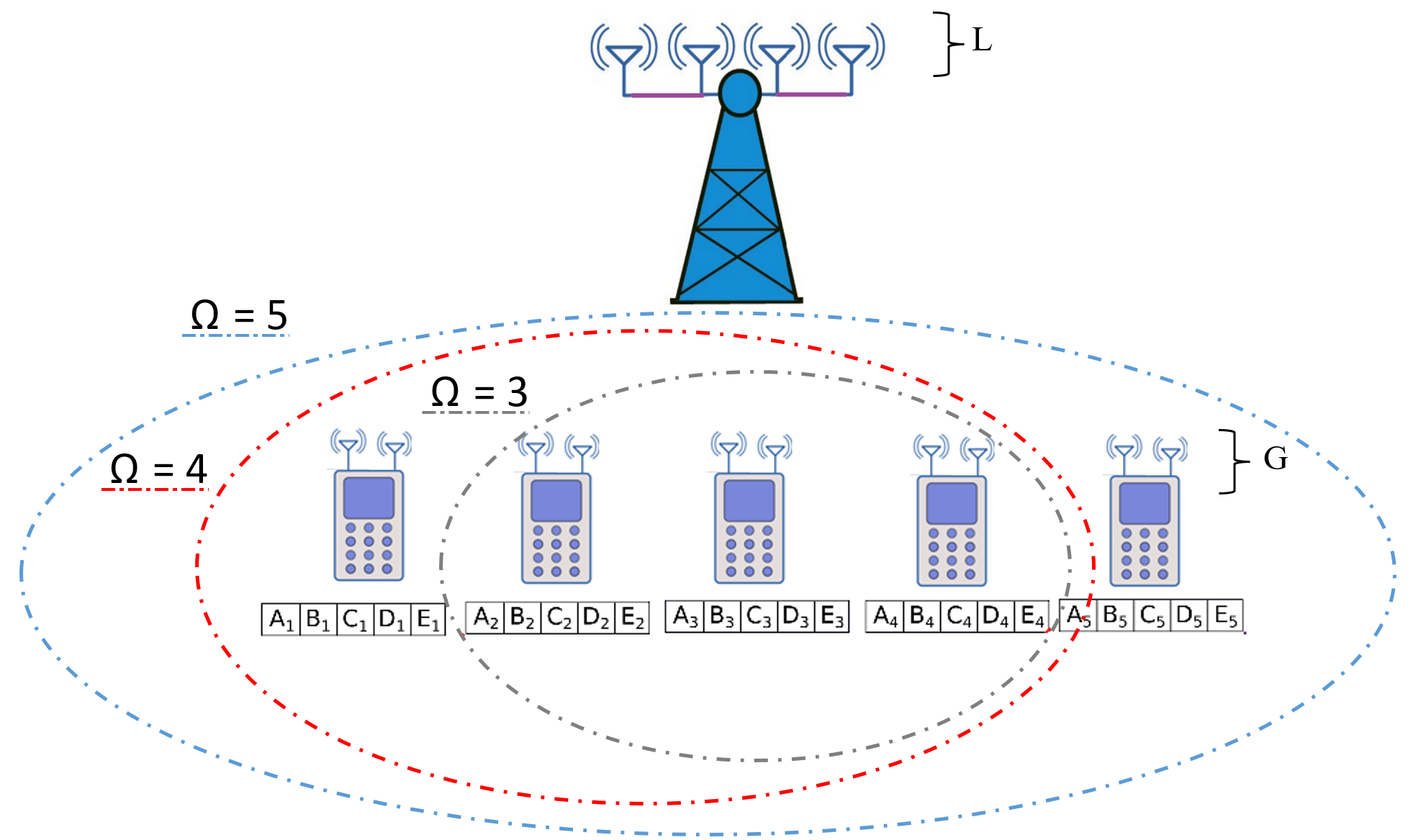} \label{fig:receiver_Prop}
    \caption{MIMO CC system model and the selection of users for different $\Omega$.}\label{fig:ISIT_sysm}
    \vspace{-5mm}
\end{figure}

At the beginning of the delivery phase, each user $k$ reveals its requested file $W(k) \in \CF$ to the server. The server then constructs and transmits (e.g., in consecutive time slots) a set of transmission vectors $\Bx(i) \in \mathbb{C}^L$,
$i \in [\binom{K}{\Omega}]$, where $\Bx(i)$ delivers parts of the requested data to every user in a subset $\CK(i)$ of users with $|\CK(i)| = \Omega$, and $t+1 \le \Omega \le t+L$ is the number of users served in each transmission, chosen to maximize the achievable DoF (detailed explanation in Section~\ref{section:Achievable DoF}). 
Upon transmission of $\Bx(i)$, user $k \in \CK(i)$ receives
$\By_k(i) = \BH_k \Bx(i) + \Bz_k(i)$,
where $\BH_k \in \mathbb{C}^{G \times L}$ is the i.i.d channel matrix with normal distribution between the server and user $k$ which is available at the server, and $\Bz_k(i) \sim \mathcal{CN}(\mathbf{0},N_0 \mathbf{I})$ represents the noise. 
In order to define the symmetric rate, we need to know the length (in data units) of each transmission vector. Using a similar reasoning as~\cite{shariatpanahi2018physical}, to ensure new data is delivered in each transmission, we need to further split every subfile $W_{\CP}$ into $\binom{K-t-1}{\Omega-t-1}$ equal-sized subpackets $W_{\CP}^q$. Moreover, as will be seen shortly (Section~\ref{section:trans_vector_building}), the codeword creation process in the delivery phase involves XOR operations over the finite field that do not alter the data size. As a result, the length of every transmission vector is the same as the subpacket length of $\frac{1}{\Theta}$, where $\Theta = \binom{K}{t}\binom{K-t-1}{\Omega-t-1}$ denotes the \emph{subpacketization} level. Now, using $R_i$ to denote the multicast (max-min) transmission rate of $\Bx(i)$ that enables successful decoding at every user $k \in \CK(i)$, the transmission time of $\Bx(i)$ is $T_i = \frac{1}{\Theta R_i}$, and the symmetric rate can be defined as\vspace{-3mm}
\begin{equation}
\label{eq:symmetric_rate} \vspace{-2mm}
    R_{sym} = \frac{K}{\sum_i T_i}= \frac{K}{\sum_i {\frac{1}{\Theta R_i}}} =  \Resize{2.3cm}{\frac{K {\binom{K}{t}\binom{K-t-1}{\Omega-t-1}}}{{\sum_i{\frac{1}{R_i}}}}}  \; .
\end{equation} \vspace{-4mm}
The goal is to design the delivery scheme to maximize $R_{sym}$. \vspace{-3mm}

\subsection{Building the Transmission Vectors $\Bx(i)$}
\label{section:trans_vector_building}
The proposed cache placement strategy, which is a straightforward adaptation of~\cite{shariatpanahi2016multi}, enables us to create an XOR codeword for every subset $\CT$ of users with $|\CT| = t+1$, such that the data intended for every user is available in the cache memory of all the other $t$ users of $\CT$. 
In our general MIMO-CC setup, we build a separate transmission vector $\Bx(i)$ for every subset $\CK(i)$ of users with $|\CK(i)| = \Omega$. Following the same intuition of SISO~\cite{maddah2014fundamental} and MISO~\cite{shariatpanahi2018physical} systems, to build $\Bx(i)$, we also create a multicast message for every subset $\CT$ of $\CK(i)$ with $|\CT| = t+1$. However, to mitigate the interference among these multicast messages, instead of the stream-specific beamforming design~\cite{tolli2017multi,salehi2023multicast,mahmoodi2021low}, we consider a general multicast transmission design as\vspace{-3mm}
\begin{equation}\vspace{-2mm}
\label{eq:TX_signal}
\Bx(i) = \Resize{0.4cm}{\sum}_{\mathcal{T}\subseteq \CK(i), |\mathcal{T}|= t+1 }{\Bx}_{\mathcal{T}}(i) ,
\end{equation}
where ${\Bx}_{\mathcal{T}}(i)$ is the corresponding multicast signal for users in $\CT$ at time interval $i$, chosen from a 
complex Gaussian codebook ${\Bx}_{\mathcal{T}} \!\sim\! \Resize{1.6cm}{\mathcal{CN}(\mathbf{0}, \BK_{{\Bx}_{\mathcal{T}}})}$ with a 
covariance matrix $\Resize{.6cm}{\BK_{{\Bx}_{\mathcal{T}}}}$.
Without loss of generality, let us consider a single time interval and remove the $i$ index (the same process is repeated at every interval). Let us also define $\Resize{3.6cm}{{\mathsf S}^{\CK} = \{\CT \subseteq \CK, |\CT| =  t +1 \}}$ as the set of all multicast groups in the considered time interval, and $\Resize{3.1cm}{{\mathsf S}_k^{\CK} = \{\CT \in {\mathsf  S}^{\CK}\! \mid \! k \in \CT \}}$ and $\Resize{1.7cm}{\bar{\mathsf S}_k^{\CK} =  {\mathsf S}^{\CK} \backslash {\mathsf S}_k^{\CK}}$ as the set of multicast groups including and not including user $k$, respectively. Then, we can rewrite the transmission vector in~\eqref{eq:TX_signal} as $\Bx = \sum_{\CT \in {\mathsf S}^{\CK}}{\Bx}_{\mathcal{T}}$, and similarly, 
the received signal at user $k \in \CK$) can also be reformulated as\vspace{-2mm}
 \begin{equation}\vspace{-2mm}
\label{eq:RX_signal_MAC}
\By_k = \Resize{.98cm}{\BH_k \sum}_{\CT\in {\mathsf S}_k^{\CK}}\:{\Bx}_{\CT} + \Resize{.98cm}{\BH_k \sum}_{\CT\in \bar{\mathsf S}_k^{\CK} }\:{\Bx}_{{\mathcal{T}}} + \Bz_k,\vspace{-1.mm}
\end{equation}
where the first and the second summation represent the intended and interference terms for user $k$, respectively. If $\Omega > t+1$, every user $k \in \CK$ should decode multiple codewords received in parallel and the codeword-specific rates are constrained by the equivalent user-specific multiple access channel (MAC) rate region with a size of $m_k = \binom{\Omega-1}{t}$ ~\cite{tolli2017multi,shariatpanahi2018physical}. More details are provided in Section~\ref{section:Multi-Group Multicast Communication in Coded caching}.




\section{Improved DoF for MIMO-CC Systems}
\label{section:Achievable DoF}
CC techniques provide the total DoF of $t+1$ in SISO setups~\cite{maddah2014fundamental}. This DoF is increased to $t+L$ in a MISO setup with the spatial multiplexing gain of $L$ at the transmitter~\cite{shariatpanahi2018physical}, and is shown to be optimal with uncoded cache placement and single-shot data delivery~\cite{lampiris2021resolving}. Denoting the DoF as the number of parallel data streams (and not the number of users),
in~\cite{salehi2021MIMO}, it is shown that if $\frac{L}{G}$ is an integer, the total DoF of a MIMO-CC setup can reach $Gt+L$, where $G$ is the spatial multiplexing gain at the receiver side. 

Consider a MIMO-CC network with the number of users $K \le t+L$.
Assume in every transmission, we serve a total number of $\Omega$ users, and each user wants to decode $\beta$ streams simultaneously. The goal is to maximize $\Omega \beta$ while assuring that decoding all the parallel streams is possible and not limited by interference. This requires that $\beta$ is limited by the rank of the equivalent MIMO-CC channel for each user. The following Lemma clarifies how $\beta$, $\Omega$, and the rank of the equivalent channel are interrelated.  

\begin{lem}
\label{lem:channel_rank}
For the considered MIMO-CC scheme we have:\\
${\mathrm{Rank}(\BH_{k,\mathrm{MIMO-CC}})\! \le\! \min \!\big(G, (L - (\Omega - t - 1)\beta) \binom{\Omega - 1}{t}\!\big).}$
\end{lem}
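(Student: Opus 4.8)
The plan is to identify $\BH_{k,\mathrm{MIMO-CC}}$ as the effective channel mapping the \emph{intended} signal dimensions of user $k$ into its $G$-dimensional received vector, and then bound its rank by the two natural quantities: the number of received dimensions $G$, and the total number of useful transmit dimensions that survive the interference-nulling constraints. Recalling \eqref{eq:RX_signal_MAC}, user $k$'s intended messages are exactly $\{\Bx_\CT : \CT \in \mathsf{S}_k^{\CK}\}$, and by definition there are $|\mathsf{S}_k^{\CK}| = m_k = \binom{\Omega-1}{t}$ of them. Writing each multicast signal through a transmit precoder, $\Bx_\CT = \BW_\CT \Bs_\CT$ with $\BW_\CT \in \mathbb{C}^{L\times\cdot}$ so that $\BK_{\Bx_\CT} = \BW_\CT\BW_\CT^H$, the useful part of user $k$'s observation is $\BH_k \sum_{\CT \in \mathsf{S}_k^{\CK}} \BW_\CT \Bs_\CT = \BH_{k,\mathrm{MIMO-CC}}\, \Bs$, where $\BH_{k,\mathrm{MIMO-CC}} = \BH_k\,[\,\BW_\CT\,]_{\CT\in\mathsf{S}_k^{\CK}}$ is the column concatenation of the per-message effective channels and $\Bs$ stacks the corresponding stream vectors.

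Next I would count the effective transmit dimension contributed by each $\BW_\CT$. For an intended group $\CT \in \mathsf{S}_k^{\CK}$ (so $k \in \CT$), the message $\Bx_\CT$ is interference at every user $l \in \CK\setminus\CT$, of which there are $\Omega - (t+1) = \Omega - t - 1$. Since each such user is allotted a $\beta$-dimensional receive subspace $\BU_l \in \mathbb{C}^{G\times\beta}$ for its own streams, interference-free decoding at these users imposes the zero-forcing conditions $\BU_l^H \BH_l \BW_\CT = \mathbf{0}$ for all $l \in \CK\setminus\CT$. These amount to $(\Omega - t - 1)\beta$ linear constraints on each column of $\BW_\CT$, which for i.i.d.\ channels are generically independent; hence the columns of $\BW_\CT$ are confined to a subspace of $\mathbb{C}^L$ of dimension at most $L - (\Omega - t - 1)\beta$, so $\mathrm{rank}(\BW_\CT) \le L - (\Omega - t - 1)\beta$ and $\BW_\CT$ carries at most this many useful dimensions. (Note that $k \notin \CK\setminus\CT$ here, so no constraint nulls the message at user $k$ itself, consistent with $\Bx_\CT$ being intended for $k$.)

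Finally I would assemble the two bounds. The matrix $\BH_{k,\mathrm{MIMO-CC}}$ has $G$ rows, giving $\mathrm{Rank}(\BH_{k,\mathrm{MIMO-CC}}) \le G$ immediately. Its effective column count is the sum over the $m_k$ intended messages of the per-message dimensions, i.e.\ at most $m_k\big(L-(\Omega - t - 1)\beta\big) = \binom{\Omega-1}{t}\big(L-(\Omega - t - 1)\beta\big)$, which bounds the rank from the column side; taking the minimum of the two yields the claim. The step I expect to be the main obstacle is making the dimension count fully rigorous: arguing that the $(\Omega - t - 1)\beta$ zero-forcing constraints are indeed generically independent, so that the per-message surviving dimension is exactly $L-(\Omega - t - 1)\beta$ rather than larger, and confirming that summing the per-message column counts is the correct accounting irrespective of how the subspaces $\mathrm{span}(\BW_\CT)$ overlap across different $\CT$. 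Because the statement only asks for an \emph{upper} bound on the rank, the overlap issue is benign here, since the rank of any matrix never exceeds its number of columns.
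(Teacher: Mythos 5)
Your proposal is correct and follows essentially the same argument as the paper: the paper's proof is exactly this dimension count, namely that each of the $\binom{\Omega-1}{t}$ multicast messages intended for user $k$ must be zero-forced at the $\Omega-t-1$ users outside its group (the $t$ in-group users cancel via cache), costing $(\Omega-t-1)\beta$ transmit dimensions per message and leaving $\big(L-(\Omega-t-1)\beta\big)\binom{\Omega-1}{t}$ useful columns, with the $G$ receive dimensions giving the other term of the $\min$. Your version merely makes the paper's informal DoF-counting explicit via precoders $\BW_\CT$ and receive subspaces, and your closing observation that the upper bound is insensitive to subspace overlaps is a fair (and correct) tightening of the paper's looser phrasing.
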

\begin{proof}
Consider an arbitrary user $k$ receiving data from ${\binom{\Omega - 1}{t}}$ multicast messages sent during an arbitrary interval. With CC gain $t$, the interference caused by every stream sent to user $k$ can be removed by cache contents at up to $t$ users. So, for successful decoding, the interference caused by streams sent to user $k$ should be suppressed by transmitter-side precoding over every stream intended for the rest of ${\Omega-t-1}$ target users; i.e., with the Tx-side spatial multiplexing of $L$, after the interference cancellation for each multicast message, the remaining spatial DoF will be ${L-(\Omega-t-1)\beta}$. However, we have ${\binom{\Omega - 1}{t}}$ total multicast messages, and each of them requires interference cancellation over a different set of streams, which means the remaining spatial multiplexing gain is ${\big(L - (\Omega - t - 1)\beta) \binom{\Omega - 1}{t}}$. On the other hand, the number of decoded streams is upper-bounded by $G$ by definition, and the proof is complete.
%
\end{proof}

Using Lemma~\ref{lem:channel_rank}, the maximum achievable DoF for the proposed MIMO-CC transmission design is given by solving
%
\begin{IEEEeqnarray}{lll}\vspace{-1mm}
\label{eq:total_DoF}
 & {\mathrm{DoF}_{\max}} =  \max_{\beta , \Omega }~ \Omega \beta,   \nonumber\\\vspace{-2.mm}
    &s.t.\:\:{\beta \le \mathrm{min}\big(\Resize{0.25cm}{G},\frac{L \binom{\Omega-1}{t}}{1 + (\Omega - t-1)\binom{\Omega-1}{t}}\big).}\vspace{-1mm}
\end{IEEEeqnarray}
The function inside the $\max$ operation is \emph{not} monotonic w.r.t $\beta$, and so the maximum DoF may not occur at $\beta = G$ (e.g., for $L=16$, $G=4$, $t=1$, the DoF of~$21$ is achieved with $\beta = 3$, while with
$\beta = G = 4$, the DoF is~$20$). 
Nevertheless, setting $\beta = G$ gives us a quick metric to calculate an achievable DoF value ${\mathrm{DoF}_G = G \lfloor \frac{L-1}{G} \rfloor+ G(t+1)}$. For example, if $L=3$, $G=\beta=2$, and $t=1$, $\mathrm{DoF}_{G}=6$ is achievable. Note that with the MIMO-CC scheme in~\cite{salehi2021MIMO}, we need at least $L=4$ to achieve the same DoF. So, our scheme not only 
relaxes the integer constraint on $\frac{L}{G}$
but also shows that the gain boost of MIMO-CC setups is possible with smaller transmitter-side spatial multiplexing gains than suggested in~\cite{salehi2021MIMO}. 
\vspace{-3mm}

\section{Multigroup Multicast Transmission}
\label{section:Multi-Group Multicast Communication in Coded caching}
\vspace{-2mm}In this work, we examine the worst-case delivery rate at which the system can serve all requests for files from the library. 
We propose an extension of linear MISO multigroup multicasting~\cite[Theorem 1]{tolli2018multicast} to a general transmission strategy in MIMO-CC systems. The goal is to optimize the minimum user-specific rate over the MIMO capacity region by developing a generalized multicast transmission strategy. The user-specific rate at a particular user $k$ is obtained by aggregating all the messages that form an equivalent MAC channel, i.e., $R_{MAC}^k  = \min_{\mathsf B \subseteq {\mathsf S}_k^{\CK}}[\frac{1}{|\mathsf B|} \sum_{{\CT}\in \mathsf B} R_{\CT} ]$, where $\mathsf B \neq \emptyset$ could represent any set family of codewords with sizes $|\mathsf B| = i,i \in \{1,\dots, {m_k} \}$, and $\sum_{{\CT}\in \mathsf B} R_{\CT}$ limits the individual and sum rate of any combination of two or more transmitted multicast signals determined with sizes of $|\mathsf B| $~\cite{tolli2018multicast}. To this end, we formulate the symmetric rate maximization problem for the MIMO multigroup multicasting model in~\eqref{eq:RX_signal_MAC} as follows: \vspace{-2mm}
\begin{IEEEeqnarray}{lll}
\label{eq:P_optimum1}
{\max_{\BK_{{\Bx}_{\CT}}, R_{\CT}, \CT\in \mathsf B}} ~~\min_{k\in \CK} 
\underset{{\mathsf B} \subseteq {\mathsf S}_k^{\CK}}{\text{min}}\Big[{\frac{1}{|{\mathsf B}|} \sum_{{\CT}\in {\mathsf B}} R_{\CT}} \Big]
  \nonumber\\
    s.t. \quad {\sum_{{\CT}\in \mathsf B }R_{\CT} \leq 
    \log| \BI+\BH_k \sum_{{\CT}\in \mathsf B }\BK_{{\Bx}_{\CT}}\BH_k^H\BQ_{k}^{-1}|}, \; \nonumber\\
  {\forall k \in \CK, \mathsf B \subseteq {\mathsf S}_k^{\CK} 
     \quad \quad \sum_{{\CT}\in {\mathsf S}^{\CK} }\text{Tr}(\BK_{{\Bx}_{\CT}})
	  \leq P_T},\vspace{-2mm}
\end{IEEEeqnarray}
where the first constraint covers the rate region among all multicast messages received by user $k$, $P_T$ is the power budget and ${\BQ_{k}  = (N_0 \BI + \BH_k \sum_{\CT\in \bar{\mathsf S}_k^{\CK} }\BK_{{\Bx}_{{\mathcal{T}}}} \BH_k^H)}$. 
The optimization problem \eqref{eq:P_optimum1} is NP-hard and non-convex due to the interference terms ${\BQ_{k}}$. Utilizing SCA for each constraint $ \Resize{1.9cm}{k \in \CK, \mathsf B \subseteq \mathsf S_k^{\CK}}$, enables an iterative solution. These constraints are expressed as the difference of two convex functions (\eqref{eq:Taylor_app1}), allowing the application of first-order Taylor series approximation to lower bound the negative logdet terms (\eqref{eq:Taylor_app2}):
\vspace{-2.0mm}
\begin{IEEEeqnarray}{lll}
\label{eq:Taylor_app1}\small
  {\sum_{{\CT}\in \mathsf B }R_{\CT} \leq \log\Big| \BQ_k+\BH_k  \sum_{{\CT}\in \mathsf B }\BK_{{\Bx}_{\CT}} \BH_k^H\Big|- \log|\BQ_k|}, \;\end{IEEEeqnarray}\vspace{-8.mm}\begin{IEEEeqnarray}{lll} \label{eq:Taylor_app2}\small 
  {\mathrm{Tr}\big(\bar{\BQ}_{k}^{-1}  \BH_k \sum_{\CT\in \bar{\mathsf S}_k^{\CK} }(\bar{\BK}_{{\Bx}_{{\mathcal{T}}}}-\BK_{{\Bx}_{{\mathcal{T}}}})\BH_k^H\big) -\log|\bar{\BQ}_{k}|}.\vspace{-2.3mm} 
\end{IEEEeqnarray}
Here the constants ${\bar{\BQ}_{k}}$ are a function of the previous SCA solutions denoted by ${\bar{\BK}_{{\Bx}_{{\mathcal{T}}}}}$. Lastly, we can express the approximated convex optimization problem as follows:
\vspace{-2.5mm}
\begin{IEEEeqnarray}{lll}
\scriptsize
\label{eq:P_optimum2}
\max_{\BK_{{\Bx}_{\CT}}, \CT \subseteq \CK, R} ~  {R} \nonumber\\[-4pt]
s.t. \:\: {R \! \leq \! \frac{1}{|\mathsf B|}\Big(\!\log\big|N_0\BI +\BH_k (\sum_{{\CT}\in \mathsf B }\BK_{{\Bx}_{\CT}} +  \!\!\sum_{\CT\in \bar{\mathsf S}_k^{\CK} }\BK_{{\Bx}_{{\mathcal{T}}}} )\BH_k^H\big|}\nonumber\\[-3pt]
\qquad\quad {+ \mathrm{Tr} ( \bar{\BQ}_{k}^{-1}  \BH_k \!\! \sum_{\CT\in \bar{\mathsf S}_k^{\CK} }(\bar{\BK}_{{\Bx}_{{\mathcal{T}}}}-\BK_{{\Bx}_{{\mathcal{T}}}})\BH_k^H ) - \log|\bar{\BQ}_{k}|\Big)} \nonumber\\[-2pt]
{\forall k \in \CK, \mathsf B \subseteq {\mathsf S}_k^{\CK},  
\quad \quad \sum_{{\CT}\in {\mathsf S}^{\CK} }\text{Tr}(\BK_{{\Bx}_{\CT}})\leq P_T}. \vspace{-2mm} 
\end{IEEEeqnarray} Starting with a randomly initial point $\bar{\BK}_{{\Bx}_{{\mathcal{T}}}}$, $\CT \in {\mathsf S}_k^{\CK}, \forall k \in \CK$, that fulfills the power constraint and iterating until convergence, the optimization problem~\eqref{eq:P_optimum2} achieves a local solution of~\eqref{eq:P_optimum1}. The stopping criterion is met when the rate improvement $R$ over the previous SCA value $\Bar{R}$ is within a specified threshold of $Er_\text{SCA}$, i.e., ${\big|R - \Bar{R}\big|}$ $\leq$ $Er_\text{SCA}$.
 
\begin{remarknum} \label{remark1}
\normalfont For the special case of $L=G$ and $\Omega = {t + 1}$, the symmetric multicast transmission signal vector in~\eqref{eq:RX_signal_MAC} becomes interference-free, and hence,~\eqref{eq:P_optimum2} is simplified to the following convex maxmin problem:\vspace{-2.5mm}    
\begin{IEEEeqnarray}{lll}
\label{eq:P_optimum1_K1_sym}
\max_{\BK_{{\Bx}_{\CT}}, \CT \subseteq \CK, R_{}^k, \forall k \in \CK} \min_{k\in \CK} 
  (R_{}^1, \dots, R_{}^k) \nonumber\\
    s.t.\quad {R_{}^{k} \leq \log|\BI+ N_0^{-1} \BH_k \BK_{{\Bx}_{\CT}}\BH_k^H|, \; \forall k \in \CK} \nonumber\\
     \quad\:\quad {\text{Tr}(\BK_{{\Bx}_{\CT}})
	  \leq P_T,}\vspace{-2mm} 
\end{IEEEeqnarray}
where $R_{}^k$ represents the rate of $k$-th user receiving a particular multicast signal.
\end{remarknum}

\vspace{-4mm}
\section{Numerical Results}
\vspace{-1mm}
\label{section:NumRes}


\begin{figure}
    \centering
    \resizebox{0.9\columnwidth}{!}{%

    \begin{tikzpicture}

    \begin{axis}
    [
    axis lines = center,
    xlabel near ticks,
    xlabel = \smaller {SNR [dB]},
    ylabel = \smaller {Symmetric Rate [bits/s]},
    ylabel near ticks,
    ymin = 0,
    xmax = 30,
    legend pos = north west,
    ticklabel style={font=\smaller},
    grid=both,
    major grid style={line width=.2pt,draw=gray!30},
    ]

   \pgfplotsset{
    myaxis/.style={
        axis lines=left,
        grid=major,             
        grid style={dotted,gray!70}, 
        legend style={
            minimum height=0.5em,
            minimum width=0.5em,
            draw=black,
            fill=none,
        },
    },
     }

    
    \addplot
    [dashed, mark = o, blue]
    table[y=RsymLG22K2t1logdet,x=SNR]{Figs/data.tex};
    \addlegendentry{\tiny \textit{L} = $2$, \textit{G} = $2$, $\Omega$ = $2$, \textit{t} = $1$, DoF = 4
    }
    
    
    \addplot
    [ mark = square, blue]
    table[y=RsymLG22K3t1,x=SNR]{Figs/data.tex};
    \addlegendentry{\tiny\textit{L} = $2$, \textit{G} = $2$, $\Omega$ = $3$, \textit{t} = $1$, DoF = 3
    }
    
    \addplot
    [dashed, mark = star, red!80]
    table[y=RsymLG22K3t2,x=SNR]{Figs/data.tex};
    \addlegendentry{\tiny \textit{L} = $2$, \textit{G} = $2$, $\Omega$ = $3$, \textit{t} = $2$, DoF = 6
    }
    
    \addplot
    [mark = o, red!80]
    table[y=RsymLG22K4t2,x=SNR]{Figs/data.tex};
    \addlegendentry{\tiny \textit{L} = $2$, \textit{G} = $2$, $\Omega$ = $4$, \textit{t} = $2$, DoF = 4
    }


     \addplot
    [dash dot, mark = o, black!90]
    table[y=RsymLG33K2t1CovMC,x=SNR]{Figs/data.tex};
    \addlegendentry{\tiny \textit{L} = $3$, \textit{G} = $3$, $\Omega$ = $2$, \textit{t} = $1$, DoF = 6 
    }

     \addplot
    [dashed, mark = x, black!90]
    table[y=RsymLG33K2t1ICASSPMC,x=SNR]{Figs/data.tex};
    \addlegendentry{\tiny \textit{L} = $3$, \textit{G} = $3$, $\Omega$ = $2$, \textit{t} = $1$, DoF = 6\cite{salehi2023multicast}}

\textit{L} =
    
    \end{axis}

    \end{tikzpicture}
    }
\vspace{-0.3cm}
    \caption{MIMO multicast design for $L = G$ }
    \label{fig:plot_2}
    \vspace{-0.3cm}
\end{figure}
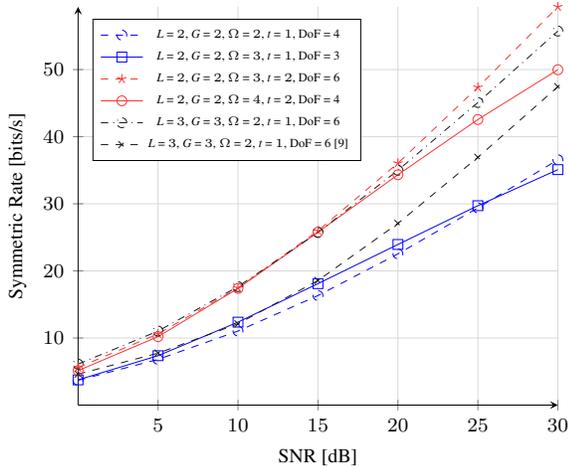
Numerical results are generated for different combinations of network and design parameters $L$ $G$, $t$, and $\Omega$ to compare various transmission strategies. 
 The SNR is defined as $\frac{P_T}{N_0}$, where $N_0$ is the fixed noise variance, and $\Resize{1.9cm}{Er_\text{SCA}\leq 10^{-4}}$. 

Fig.~\ref{fig:plot_2} compares the symmetric rate of the $L=G$ case, for different $\Omega$, $t$, and SNR. We observe that at high SNR, the slopes of the symmetric rate curves align with DoF expectations (higher DoF, steeper slope, higher symmetric rate). Interestingly, the $L=G=2$ case exhibits the same DoF value of 4, for both $\Omega=2$, $t=1$ and $\Omega=4$, $t=2$, even though the CC gain is larger in the latter case (of course, the symmetric rate is higher with $t=2$ due to increased multicasting gains). This highlights how choosing $\Omega$ affects the DoF. In fact,
choosing the proper $\Omega$ value ($\Omega=2$ for $t=1$ and $\Omega = 3$ for $t=2$) enables achieving Rank-2 transmit covariance matrices (full rank), allowing more parallel streams in total. In contrast, with a larger $\Omega$, we will be limited to rank-1 covariance matrices, and the results will be essentially similar to the MISO case~\cite{tolli2017multi} albeit with a constant positive rate shift due to additional receiver combining.

The same figure shows that for $L=G=2$, $t =1$, choosing $\Omega = 3$ yields a better symmetric rate than $\Omega = 2$ despite achieving a smaller DoF. This is because the system is fully loaded with $\Omega = 2$, constraining the beamforming gain at low SNR~\cite{tolli2017multi}. Finally, comparing our transmission strategy with~\cite{salehi2023multicast} for the $L=G=3$ case, it can be seen that our approach consistently outperforms the other, setting an upper bound on the performance of its linear transceiver design.

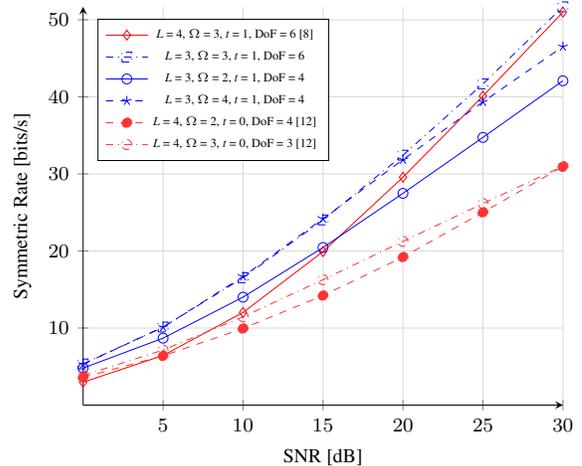
\begin{figure}
    \centering
    \resizebox{0.9\columnwidth}{!}{%

    \begin{tikzpicture}

    \begin{axis}
    [
    axis lines = center,
    xlabel near ticks,
    xlabel = \smaller {SNR [dB]},
    ylabel = \smaller {Symmetric Rate [bits/s]},
    ylabel near ticks,
    ymin = 0,
    xmax = 30,
    legend pos = north west,
    ticklabel style={font=\smaller},
    grid=both,
    major grid style={line width=.2pt,draw=gray!30},
    ]
    


    \addplot
    [mark = o, mark = diamond, red]
    table[y=RsymLG42K3t1WSAZFUnicast,x=SNR]{Figs/data.tex};
    \addlegendentry{\tiny \textit{L} = 4,  $\Omega$ = 3, \textit{t} = 1, DoF  = 6 \cite{salehi2021MIMO} }

        \addplot
    [dash dot, mark = square, blue]
    table[y=RsymLG32K3t1,x=SNR]{Figs/data.tex};
    \addlegendentry{\tiny \textit{L} = 3,  $\Omega$ = 3, \textit{t} =  1, DoF = 6}

        \addplot
    [ mark = o, blue]
    table[y=RsymLG32K2t1,x=SNR]{Figs/data.tex};
    \addlegendentry{\tiny \textit{L} = 3,   $\Omega$ = 2, \textit{t} = 1, DoF = 4}
    
    \addplot
        [dashed, mark = star, blue]
        table[y=RsymLG32K4t1,x=SNR]{Figs/data.tex};
        \addlegendentry{\tiny \textit{L} = 3,  $\Omega$ = 4, \textit{t} = 1, DoF = 4 }


    

    \addplot
        [dashed, mark = * , red!80]
        table[y=RsymLG42K2t0MIMOUnicast,x=SNR]{Figs/data.tex};
        \addlegendentry{\tiny \textit{L} = 4, $\Omega$ = 2, \textit{t} = 0, DoF  = 4 \cite{biglieri2007mimo}}

 \addplot
        [dash dot, mark = o , red!80]
        table[y=RsymLG42K3t0MIMOUnicast,x=SNR]{Figs/data.tex};
        \addlegendentry{\tiny \textit{L} = 4, $\Omega$ = 3, \textit{t} = 0,  DoF  = 3 \cite{biglieri2007mimo}}
        

    
    
    \end{axis}

    \end{tikzpicture}
    }
\vspace{-0.3cm}
    \caption{MIMO multicast design for $L > G $ - $G =2, t=\{1,0\}$}
    \label{fig:plot_3}
\vspace{-0.3cm}
\end{figure}


In Fig.~\ref{fig:plot_3}, we have considered the case $L > G$.
Accordingly, the slope of the delivery scheme with $L = 3$, $G = 2$, and $\Omega = 3$ confirms the achievability of the DoF of 6, even though the Tx-side multiplexing gain is smaller than $L=4$ suggested by~\cite{salehi2021MIMO} (of course, the symmetric rate of the $L=4$ case still exceeds the $L=3$ case due to the larger Tx-side beamforming gain, but the DoF for both cases is 6).  
Also, for the case of $L=3$, $G=2$, by comparing $\Omega = 2$ and $\Omega = 4$, we see that despite having the same DoF, the latter setup has a significantly larger symmetric rate, which can be explained by the extra multi-user diversity gain (larger number of UEs and Rx antennas) offered. 
Lastly, to assess the impact of CC in MIMO setups, we have also included simulation results for the scenario without CC ($t=0$, baseline MIMO communication as discussed in~\cite{biglieri2007mimo}).
The outcomes validate the substantial potential of CC in enhancing the performance of MIMO systems, in terms of both DoF and the symmetric rate.

\vspace{-3.0mm}
\section{Conclusion}
\vspace{-1mm}

In this paper, we proposed a flexible CC scheme for MIMO setups by optimizing the number of users served in each transmission to maximize the achievable DoF. We also proposed a high-performance beamformer design for MIMO-CC setups by formulating the problem of maximizing the symmetric rate w.r.t transmit covariance matrices for the multicast signals. We then solved the non-convex problem using SCA and verified the performance enhancement through simulations.

\bibliographystyle{IEEEtran}
\bibliography{references,whitepaper}


\end{document}